\documentclass[aps,prl,twocolumn,nofootinbib,longbibliography,superscriptaddress]{revtex4-1}
\usepackage{hyperref}
\usepackage{amsmath}
\usepackage{amssymb}
\usepackage{amsthm}
\usepackage{graphicx}
\newcommand{\be}{\begin{equation}}
\newcommand{\ee}{\end{equation}}
\newcommand{\ran}{\rangle}

\newcommand{\bi}{\begin{itemize}}
\newcommand{\ei}{\end{itemize}}
\newcommand{\mO}{\mathcal{O}}
\newcommand{\CA}{\mathcal{C}_A}
\newcommand{\EA}{\mathcal{E}_A}
\newcommand{\Ab}{\overline{A}}
\newcommand{\ab}{\overline{a}}
\newcommand{\HA}{\mathcal{H}_A}
\newcommand{\HAb}{\mathcal{H}_{\overline{A}}}
\newcommand{\Hc}{\mathcal{H}_{code}}
\newcommand{\Ha}{\mathcal{H}_a}
\newcommand{\Hab}{\mathcal{H}_{\overline{a}}}
\newcommand{\Al}{\mathcal{A}_\text{loc}}
\newcommand{\mH}{\mathcal{H}}
\newcommand{\bfig}{\begin{figure}\begin{center}}
\newcommand{\efig}{\end{center}\end{figure}}
\newtheorem*{theorem}{Theorem}
\def\ba#1\ea{\begin{align}#1\end{align}}
\def\bg#1\eg{\begin{gather}#1\end{gather}}
\def\a{\alpha}

\def\l{\lambda}

\def\p{\phi}
\def\P{\Phi}

\def\r{\rho}
\def\s{\sigma}

\def\y{\psi}

\def\la{\label}

\def\er{\eqref}

\def\eq{\equiv}

\def\qu{\quad}
\def\qqu{\qquad}

\def\({\left(}
\def\){\right)}
\def\[{\left[}
\def\]{\right]}
\def\<{\langle}
\def\>{\rangle}

\def\Tr{{\rm Tr}}
\begin{document}

\title{Reconstruction of Bulk Operators within the Entanglement Wedge
\\in Gauge-Gravity Duality}
\author{Xi Dong}
\affiliation{School of Natural Sciences, Institute for Advanced Study, Princeton, NJ 08540, USA}
\author{Daniel Harlow}
\affiliation{Center for the Fundamental Laws of Nature, Harvard University, Cambridge MA, 02138 USA}
\author{Aron C. Wall}
\affiliation{School of Natural Sciences, Institute for Advanced Study, Princeton, NJ 08540, USA}
\begin{abstract}
In this Letter we prove a simple theorem in quantum information theory, which implies that bulk operators in the Anti-de Sitter / Conformal Field Theory (AdS/CFT) correspondence can be reconstructed as CFT operators in a spatial subregion $A$, provided that they lie in its entanglement wedge.  This is an improvement on existing reconstruction methods, which have at most succeeded in the smaller causal wedge.  The proof is a combination of the recent work of Jafferis, Lewkowycz, Maldacena, and Suh on the quantum relative entropy of a CFT subregion with earlier ideas interpreting the correspondence as a quantum error correcting code.
\end{abstract}

\maketitle
\section{Introduction}
The AdS/CFT correspondence tells us that
certain
large-$N$ strongly-coupled CFTs define theories of quantum gravity in asymptotically AdS space \cite{Maldacena:1997re,Gubser:1998bc,Witten:1998qj,Banks:1998dd,Aharony:1999ti,Heemskerk:2009pn}.  For this definition to be complete, we need to know how to use CFT language to ask any question of interest in the bulk.  The most basic items in this dictionary are:
\bi
\item Quantum states in the Hilbert space of the CFT correspond to quantum states in the bulk.
\item The conformal and global symmetry generators in the CFT correspond to the analogous asymptotic symmetries in the bulk.  For example the CFT Hamiltonian maps to the ADM Hamiltonian, and a $U(1)$ charge maps to the electric flux at infinity.
\item  In the large $N$ limit, any single-trace primary operator $\mO(x)$ in the CFT, with scaling dimension $\Delta$ of order $N^0$, corresponds to a bulk field $\phi(x,r)$.  They are related by the ``extrapolate dictionary'' \cite{Banks:1998dd,Balasubramanian:1998de,Harlow:2011ke}
\be\label{extd}
\mO(x)=\lim_{r\to\infty} r^{\Delta}\phi(x,r) \,.
\ee
\ei
These give an excellent starting point for understanding the correspondence, but the limiting procedure in the extrapolate dictionary makes it difficult to concretely discuss what is going on deep within the bulk.  There is a good reason for this: although the CFT side of the duality is defined exactly, the bulk side is ultimately some nonperturbative theory of quantum gravity, which is only approximately given by a semiclassical path integral over local bulk fields.  This means that any attempt to ``back off of the extrapolate dictionary'' and formulate a CFT representation of the bulk operator $\phi(x,r)$, usually called a reconstruction, must itself be only an approximate notion.  Nonetheless, there is a standard algorithm for producing such a $\phi(x,r)$ perturbatively in $1/N$: one constructs it by solving the bulk equations of motion with boundary conditions \eqref{extd} \cite{Banks:1998dd,Hamilton:2006az,Kabat:2011rz,Heemskerk:2012mn,Kabat:2012hp,Heemskerk:2012np,Morrison:2014jha}.  This algorithm has the nice feature that it can be done quite explicitly, but it has the disadvantage that it relies on solving a rather nonstandard Cauchy problem \cite{Bousso:2012mh,Heemskerk:2012mn}.  In this paper we will refer to it as the HKLL procedure, after \cite{Hamilton:2006az}, who were the first to study it in detail.

The HKLL procedure has the interesting property that it also is sometimes able to reconstruct a bulk operator $\phi(x,r)$ as a CFT operator with nontrivial support only on some spatial subregion $A$ of a boundary Cauchy slice $\Sigma$ \cite{Hamilton:2006az,Morrison:2014jha}.  This is believed to occur whenever the point $(x,r)$ lies in the \textit{causal wedge} of $A$, denoted $\CA$.\footnote{As mentioned above, this reconstruction relies on solving a nonstandard Cauchy problem \cite{Bousso:2012mh,Heemskerk:2012mn}.  In general it is not known if the solution really exists, but it can be found explicitly (in a distributional sense \cite{Morrison:2014jha}) for the case of spherical boundary regions in the vicinity of the AdS vacuum; in this case it is called the AdS-Rindler reconstruction.}  $\CA$ is defined as the intersection in the bulk of the bulk causal future and past of the boundary domain of dependence of $A$ \cite{Hubeny:2012wa}.

Causal wedge reconstruction via the HKLL procedure gives an explicit illustration of ``subregion-subregion duality'', which is the notion that a spatial subregion $A$ in the boundary theory contains complete information about some subregion of the bulk \cite{Bousso:2012sj,Czech:2012bh,Bousso:2012mh}.  It has been proposed however that the subregion dual to $A$ is not just $\mathcal{C}_A$, but rather a larger region: the \textit{entanglement wedge} $\EA$ \cite{Czech:2012bh,Wall:2012uf,Headrick:2014cta}.  To define the entanglement wedge, we must first define the HRT surface $\chi_A$ \cite{Hubeny:2007xt} (the covariant generalization of the Ryu-Takayanagi proposal \cite{Ryu:2006bv}).  $\chi_A$ is defined as a codimension-two bulk surface of extremal area, which has boundary $\partial \chi_A=\partial A$ and is homologous to $A$ through the bulk (in the event that multiple such surfaces exist, it is the one with least area).  The HRT formula tells us that at leading order in $1/N$, the area of $\chi_A$ is proportional to the von Neumann entropy of the region $A$ in the CFT \cite{Ryu:2006bv,Hubeny:2007xt,Lewkowycz:2013nqa}.  The entanglement wedge $\EA$ is then defined as the bulk domain of dependence of any achronal bulk surface whose boundary is $A \cup \chi_A$.

Recently, plausibility arguments for entanglement wedge reconstruction have been developing along two different lines. \cite{Almheiri:2014lwa} proposed a re-interpretation of causal wedge reconstruction as quantum error correction, which is a structure that very naturally allows an extension to entanglement wedge reconstruction.  This was explicitly realized in toy models \cite{Pastawski:2015qua,Hayden:2016cfa}.  At the same time it has gradually been understood that a proposal \cite{Faulkner:2013ana} for the first $1/N$ correction to the HRT formula is indicative that entanglement wedge reconstruction should be possible \cite{Almheiri:2014lwa,Jafferis:2014lza,Jafferis:2015del}.  In particular Jafferis, Lewkowycz, Maldacena, and Suh (JLMS) have given a remarkable bulk formula for the modular Hamiltonian associated to any CFT region, which implies that the relative entropy of two states in a boundary spatial region $A$ is simply equal to the relative entropy of the two bulk states in $\EA$ \cite{Jafferis:2015del}.  The purpose of this paper is to tie all of these ideas together into a proof that entanglement wedge reconstruction is in fact possible in AdS/CFT: given a boundary subregion $A$, all bulk operators in $\EA$ have CFT reconstructions as operators in $A$.

\section{Entanglement Wedge Reconstruction as Quantum Error Correction}
We mentioned above that the holographic nature of AdS/CFT prevents bulk operator reconstruction from being a precise notion.  This reveals itself not just in the perturbative nature of the the HKLL algorithm, but also in its regime of validity.  As one acts with more and more reconstructed operators in a region, the threshold for black hole formation is eventually crossed.  This leads to a breakdown of the construction.  In \cite{Almheiri:2014lwa} this was formalized into the notion that one should think of reconstructed bulk operators as only making sense in a \textit{code subspace} of the CFT Hilbert space, which we denote $\Hc$.  The choice of this subspace is not unique, since in general we can choose to define it based on whatever observables we are interested in studying, but the simplest thing to do is choose a state which we know has a geometric interpretation, and then consider the subspace of all states where the backreaction of the metric about that geometry is perturbatively small.\footnote{This is an overly conservative definition of the code subspace, since a certain amount of backreaction (such as that present in the solar system) can be included by resumming subclasses of diagrams in the perturbative expansion \cite{Heemskerk:2012mn,Marolf:2015dia}.  We adopt it nonetheless to simplify our arguments below.}

As a concrete example, we can consider the subspace of states of the CFT on a sphere whose energy is less than that of a Planck-sized black hole in the center of the bulk, together with the image of this subspace under the conformal group.  For example in the $\mathcal{N}=4$ super Yang-Mills theory with gauge group $SU(N)$, and with gauge coupling $g\sim 1$ to equate the string and Planck scales, this corresponds to the set of primary operators whose dimensions are $\lesssim N^{1/4}$, together with their conformal descendants.  If the duality is correct, meaning that correlation functions of local boundary operators in these states can be reproduced by bulk Feynman-Witten diagrams with some effective action,\footnote{See \cite{Heemskerk:2009pn} for a discussion of which CFTs have this property, and also for how to determine the effective action in terms of CFT data if they do.} then the machinery of \cite{Banks:1998dd,Hamilton:2006az,Kabat:2011rz,Heemskerk:2012mn,Kabat:2012hp,Heemskerk:2012np,Morrison:2014jha} enables us to explicitly reconstruct all low-energy bulk operators $\phi(x,r)$ in such a way that their correlation functions in any state in the code subspace agree with those computed in bulk effective field theory with that effective action to all orders in $1/N$.  For this to work in detail we must choose some sort of covariant UV cutoff in the bulk, but we will not discuss this explicitly in what follows.

We can now describe entanglement wedge reconstruction.  Say that we split a Cauchy slice $\Sigma$ of the boundary CFT into a region $A$ and its complement $\Ab$.  The CFT Hilbert space has a tensor factorization as
\be
\mathcal{H}_{CFT}=\HA\otimes \HAb \,.
\ee
Similarly we can think of the code subspace as factorizing as
\be\label{codefactor}
\Hc=\Ha\otimes \Hab \,,
\ee
where $\Ha$ denotes the Hilbert space of bulk excitations in $\EA$, and $\Hab$ denotes the Hilbert space of bulk excitations in $\mathcal{E}_{\overline{A}}$.  We illustrate this in Figure \ref{fig}.
\bfig
\includegraphics[height=3cm]{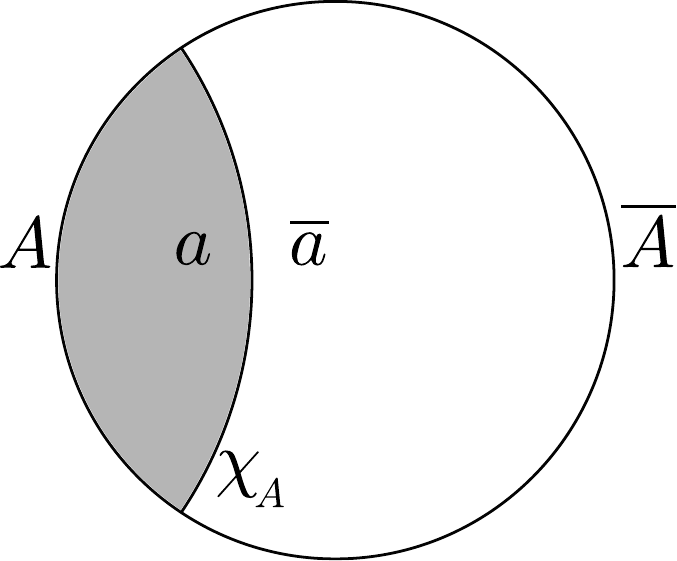}
\caption{Factorizing the bulk and boundary on a time slice.  The entanglement wedge $\EA$ is shaded.  For simplicity we have shown a connected boundary region $A$, although this might not be the case. }\label{fig}
\efig
Both tensor factorizations are complicated by the fact that gauge constraints might be present \cite{Donnelly:2011hn,Casini:2013rba,Donnelly:2014fua,Donnelly:2015hxa,Radicevic:2015sza,Soni:2015yga,Ma:2015xes,Donnelly:2016auv}, but likely this problem can be absorbed into the choice of UV cutoff (as shown for $U(1)$ gauge fields by \cite{Harlow:2015lma}).  We could dispense with this issue by formulating our arguments in terms of subalgebras instead of subfactors, see \cite{Casini:2013rba} for the relevant definitions, but we have opted for the latter for familiarity.

Entanglement wedge reconstruction is then the statement that any bulk operator $O_a$ acting within $\Ha$ can always be represented in the CFT with an operator $O_A$ that has support only on $\HA$.  In fact this is the precise definition of the idea that the operator $O_a$ can be ``corrected'' for the erasure of the region $\Ab$ \cite{beny2007generalization}, and the observation that a given $O_a$ can be reconstructed on different choices of $A$ reflects the ability of the code to correct for a variety of erasures \cite{Almheiri:2014lwa}.  We will establish the existence of $O_A$ below, but first we need to review a recent result that will be essential for the proof.

\section{Review of the JLMS Argument}
In \cite{Jafferis:2015del}, JLMS argued for an equivalence of relative entropy between bulk and boundary.  In this section we give a more detailed derivation of this result, which also extends it to higher orders in the semiclassical expansion given an assumption we state precisely below. Let us first recall that the von Neumann entropy of any state is defined as
\be
S(\rho)\equiv -\Tr \left(\rho \log \rho \right) \,,
\ee
its modular Hamiltonian is defined as
\be
K_\rho\equiv -\log \rho \,,
\ee
and the relative entropy of a state $\rho$ to a state $\sigma$ is
\be\label{relent}
S(\rho|\sigma)\equiv \Tr \left(\rho \log \rho\right)-\Tr\left(\rho \log \sigma\right) =-S(\rho)+\Tr \left(\rho K_\sigma\right) \,.
\ee
Relative entropy is non-negative, and vanishes if and only if $\rho=\sigma$.  Under small perturbations of the state, the entropy and modular Hamiltonian are related by the ``first law of entanglement'':\footnote{This is derived in e.g.~\cite{Bianchi:2012br,Lashkari:2013koa} by linearizing the entropy.  To deal with possible non-commutativity of $\rho$ and $\delta\rho$ one can use the Baker-Campbell-Hausdorff formula and recall that $\Tr\left(A[B,C]\right)=0$ if $A$ and $B$ are simultaneously diagonalizable.}
\be\label{entfl}
S(\rho+\delta\rho)-S(\rho)=\Tr \left(\delta \rho\, K_\rho\right)+\mO(\delta\rho^2) \,.
\ee

Now consider the setup of the previous section, where in some holographic CFT we pick a code subspace $\Hc=\Ha\otimes \Hab$ in which effective field theory perturbatively coupled to gravity is valid, and in which all states can be derived from path integral constructions in this effective theory.  Following JLMS, we recall that Faulkner et al. \cite{Faulkner:2013ana} showed that such states\footnote{The argument of \cite{Faulkner:2013ana}, like the earlier classical argument \cite{Lewkowycz:2013nqa}, assumes that the von Neumann entropy can be found by an analytic continuation of certain ``Renyi entropies'' $\Tr(\rho^n)$, and that the bulk path integral for calculating the $n$-th Renyi entropy does not spontaneously break the associated $\mathbb{Z}_n$ symmetry.  There is also some subtlety in applying the argument to states that do not possess a moment of time-reflection symmetry, but this seems to just be a matter of convenience, and an argument that dispenses with this criterion will appear elsewhere soon \cite{dlr}.} $\rho$ obey
\be\label{flm}
S(\rho_A)=S(\rho_a)+\Tr\left(\rho_a \Al\right) \,.
\ee
Here $\Al$ denotes a bulk operator that is a local integral over the HRT surface $\chi_A$.  At leading order in $1/N$, or equivalently at leading order in the gravitational coupling $G$, we have
\be
\Al = \frac{\text{Area}(\chi_A)}{4G} \,,
\ee
as required by the HRT formula, but $\Al$ receives corrections at higher orders in $1/N$ \cite{Barrella:2013wja,Faulkner:2013ana}, or in the presence of more general gravitational interactions \cite{Wald:1993nt,Iyer:1994ys,Jacobson:1993vj,Jacobson:1993xs,Solodukhin:2008dh,Hung:2011xb,Bhattacharyya:2013jma,Fursaev:2013fta,Dong:2013qoa,Camps:2013zua,Miao:2014nxa}.

In fact \cite{Faulkner:2013ana} only established \eqref{flm} to order $N^0$.  In \cite{Engelhardt:2014gca} it was suggested that \eqref{flm} continues to hold to all orders in $1/N$, provided that one always defines $\chi_A$ to be extremal with respect to the sum on the right hand side of \eqref{flm} (sometimes called the ``generalized entropy''); such a $\chi_A$ is known as a quantum extremal surface.  We will assume this to be the case throughout the code subspace in what follows: if it is not then our arguments only establish entanglement wedge reconstruction to order $N^0$.

To establish the JLMS result, we now observe that linearizing \eqref{flm} about $\sigma$ and using the first law \eqref{entfl}, we have
\be\label{inf}
\Tr\left(\delta\sigma_A K_{\sigma_A}\right)=\Tr\left(\delta\sigma_a\left(\Al^{\{\sigma\}}+K_{\sigma_a}\right)\right) \,.
\ee
Here we have taken $\delta \sigma$ to be an arbitrary perturbation that acts within the code subspace $\Hc$.  We have written $\Al^{\{\sigma\}}$ to emphasize that $\Al$ is still located at the surface defined by extremizing $S(\sigma_a)+\Al$.  \eqref{inf} is linear in $\delta \sigma$, so we can integrate it to obtain
\be\label{full}
\Tr\left(\rho_A K_{\sigma_A}\right)=\Tr\left(\rho_a^{\{\sigma\}}\left(\Al^{\{\sigma\}}+K_{\sigma_a}\right)\right) \,,
\ee
where now $\rho$ and $\sigma$ are arbitrary states acting within $\Hc$.  We have written $\rho_a^{\{\sigma\}}$ to clarify that we are still factorizing the bulk Hilbert space at the quantum extremal surface for $\sigma$, even though we are now considering another state $\rho$ as well.  Since \eqref{full} holds for any $\rho$ acting within $\Hc$, it implies that
\be\label{JLMS2}
\Pi_c K_{\sigma_A}\Pi_c= K_{\sigma_a}+\Al^{\{\sigma\}} \,,
\ee
where $\Pi_c$ is the projection operator onto the code subspace, and where we have defined the operators on the right hand side to annihilate $\Hc^\perp$.  This is one of the main results of \cite{Jafferis:2015del}.

Moreover combining \eqref{relent}, \eqref{flm}, and \eqref{full}, we find that
\begin{align}\nonumber
S(\rho_A|\sigma_A)=&S(\rho_a^{\{\sigma\}}|\sigma_a)\\\nonumber
&+\Big[\Tr\left(\rho_a^{\{\sigma\}}\Al^{\{\sigma\}}\right)-\Tr\left(\rho_a^{\{\rho\}}\Al^{\{\rho\}}\right)\\
&+S\left(\rho_a^{\{\sigma\}}\right)-S\left(\rho_a^{\{\rho\}}\right)\Big] \,.\label{JLMS}
\end{align}
To order $N^0$ the terms in square brackets cancel
due to the extremality of $\chi_A^{\{\rho\}}$ with respect to $S(\rho_a) + \Al$,
so we find the equivalence of relative entropies which is the other main result of \cite{Jafferis:2015del}.\footnote{This equivalence is complicated by gravitons, which are metric variations of order $\sqrt{G}$ and thus can produce second-order variations of order $N^0$.  These can be dispensed with by introducing a gauge (e.g. the one in \cite{Jafferis:2015del}) where the quantum extremal surface $\chi_A$ does not move until order $N^0$.}

More generally, we find that the relative entropies differ by the difference of generalized entropies of $\rho$ on the two quantum extremal surfaces $\chi_A^{\{\rho\}}$ and $\chi_A^{\{\sigma\}}$.  In either case, we observe that if $\rho_a\!{}^{\{\sigma\}}=\sigma_a$, then \eqref{JLMS} implies $S(\rho_A|\sigma_A)=0$, and thus $\rho_A=\sigma_A$.  This is the result that we will use in the proof below.

By symmetry, the results \eqref{JLMS},\eqref{JLMS2} apply also for the complement regions $\Ab$ and $\ab$, although for a mixed state one must use $S(\rho_{\ab})$ rather than $S(\rho_{a})$ when extremizing to find $\chi_{\Ab}$.\footnote{In this case, we expect that the two entanglement wedges can never overlap.}

At order $N^0$ it is intuitively clear that \eqref{JLMS} says that we must be able to reconstruct in the entanglement wedge.  Relative entropy is a measure of the distinguishability of two quantum states, so we can only have \eqref{JLMS} if $\rho_A,\sigma_A$ have just as much information about the bulk as $\rho_a,\sigma_a$.  We now prove a theorem that makes this precise, and extends it to higher orders in $1/N$ given the proposal of \cite{Engelhardt:2014gca}.

\section{A Reconstruction Theorem}
\begin{theorem}
Let $\mH$ be a finite-dimensional Hilbert space, $\mH = \HA \otimes \HAb$ be a tensor factorization, and $\Hc$ be a subspace of $\mH$.  Let $O$ be an operator that, together with its Hermitian conjugate, acts within $\Hc$.  If for any two pure states $|\p\>, |\y\> \in \Hc$, there exists a tensor factorization $\Hc = \Ha \otimes \Hab$ such that $O$ acts only on $\Ha$, and the reduced density matrices
\ba\la{rdm}\nonumber
\r_{\Ab} &\eq \Tr_A |\p\> \<\p| \,,\qqu
\s_{\Ab} \eq \Tr_A |\y\> \<\y| \,,\\
\r_{\ab} &\eq \Tr_a |\p\> \<\p| \,,\qqu
\s_{\ab} \eq \Tr_a |\y\> \<\y|
\ea
satisfy
\be\la{cond}
\rho_{\ab}=\sigma_{\ab} \qu\Rightarrow\qu \rho_{\Ab}=\sigma_{\Ab} \,,
\ee
then both of the following statements are true:
\begin{enumerate}
\item For any $X_{\Ab}$ acting on $\HAb$ and any state $|\p\> \in \Hc$, we have
\be\la{zcomm}
\<\p| [O, X_{\Ab}] |\p\> =0 \,.
\ee
\item There exists an operator $O_A$ acting just on $\HA$ such that $O_A$ and $O$ have the same action on $\Hc$, i.e.
\be\la{idact}
O_A |\p\> = O |\p\> \,,\qqu
O_A^\dag |\p\> = O^\dag |\p\> \,,
\ee
for any state $|\p\> \in \Hc$.
\end{enumerate}
\end{theorem}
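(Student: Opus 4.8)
The plan is to establish statement 1 first, by an infinitesimal ``unitary flow'' argument built directly on the hypothesis \eqref{cond}, and then to derive statement 2 from it by recasting \eqref{zcomm} as a commutation relation and invoking an operator-pushing construction. For statement 1 I would first reduce to Hermitian $O$: writing $O=O_1+iO_2$ with $O_1,O_2$ Hermitian, and noting that whenever a factorization makes $O$ act only on $\Ha$ it also makes $O_1$ and $O_2$ act only on $\Ha$ (because $O^\dag$ acts within $\Hc$ as well), it suffices to treat each Hermitian piece separately.

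Fix $|\p\>\in\Hc$ and a Hermitian generator $G\in\{O_1,O_2\}$, and consider the family $|\p(s)\>=e^{isG}|\p\>$, which stays in $\Hc$. For each $s$ I apply the hypothesis to the pair $(|\p(s)\>,|\p\>)$: in the associated factorization $\Hc=\Ha\otimes\Hab$ the generator $G$ acts only on $\Ha$, so $e^{isG}=e^{isG_a}\otimes\mathbb{1}_{\bar a}$ and tracing out $a$ gives $\r_{\ab}(s)=\Tr_a|\p(s)\>\<\p(s)|=\Tr_a|\p\>\<\p|=\s_{\ab}$. Condition \eqref{cond} then forces $\r_{\Ab}(s)=\Tr_A|\p(s)\>\<\p(s)|=\s_{\Ab}$ for every $s$ (the factorization may vary with $s$, but this equality of $\Ab$-reduced states does not reference it). Differentiating at $s=0$ yields $\Tr_A[G,|\p\>\<\p|]=0$, and recombining the two Hermitian pieces gives $\Tr_A[O,|\p\>\<\p|]=0$; pairing this against an arbitrary $X_{\Ab}$ and using cyclicity of the trace is exactly \eqref{zcomm}.

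For statement 2 I would first push \eqref{zcomm} off the diagonal: since $\<\p|[O,X_{\Ab}]|\p\>=0$ for every $|\p\>\in\Hc$, complex polarization gives $\Pi_c[O,X_{\Ab}]\Pi_c=0$ for all $X_{\Ab}$, and likewise for $O^\dag$. Read as matrix elements, this says $\tilde O:=\Pi_c O\Pi_c$ commutes on $\Hc$ with every projected operator $\Pi_c X_{\Ab}\Pi_c$. I would then realize $O_A$ by operator pushing: introduce a reference $R$ and the maximally entangled state $|\W\>=d_c^{-1/2}\sum_\a|\a\>|\a\>_R$ built from an orthonormal basis $\{|\a\>\}$ of $\Hc$, and seek $O_A$ with $(O_A\otimes\mathbb{1})|\W\>=d_c^{-1/2}\sum_\a(O|\a\>)|\a\>_R$. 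Such an $O_A$ exists precisely when the right-hand side lies in $\HA\otimes\mathrm{supp}(\r^\W_{\Ab R})$; matching the $\<a|_A$-components amounts to solving, for the unknown matrix $(O_A)_{ab}$ taken independent of $\a$, a linear system into which the commutator condition must enter. Because any solution gives $O_A|\a\>=O|\a\>$ on each basis vector, it reproduces $O$ on all of $\Hc$ with no leakage, and the $O^\dag$ version of the argument secures $O_A^\dag|\p\>=O^\dag|\p\>$, completing \eqref{idact}.

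The main obstacle is this last step: showing that the single-operator commutation condition (for both $O$ and $O^\dag$) genuinely suffices to push $O$ entirely onto $\HA$, i.e. verifying the reachability/support condition above. This is the standard ``cleaning lemma'' of operator-algebra quantum error correction, and the delicate point is that the projected set $\{\Pi_c X_{\Ab}\Pi_c\}$ need not be an algebra, so one cannot simply invoke a double-commutant argument; the commutation relations must be used directly to guarantee that the required $O_A$ is simultaneously $\Ab$-independent and supported on $A$ alone. I expect to spend most of the effort there, whereas statement 1 and the polarization step are routine.
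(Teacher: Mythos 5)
Your argument for statement 1 is correct and coincides with the paper's own proof: reduce to Hermitian $O$, apply the hypothesis to the pair $\left(|\p\>,\,e^{i\l O}|\p\>\right)$, use unitarity of $e^{i\l O}$ on $\Ha$ to get $\r_{\ab}=\s_{\ab}$, invoke \eqref{cond}, and linearize in the flow parameter (differentiating the reduced state rather than the expectation value of $X_{\Ab}$ is a cosmetic difference). Your polarization step $\Pi_c[O,X_{\Ab}]\Pi_c=0$ is also correct, and is in fact an ingredient of the full argument.

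The genuine gap is in statement 2, and you flag it yourself: the ``reachability/support condition'' is never verified, and it is the entire content of the implication \eqref{zcomm} $\Rightarrow$ \eqref{idact}. (The paper does not reprove this implication either --- it invokes the operator-algebra error-correction theorem of Appendix B of \cite{Almheiri:2014lwa}, of which its own appendix is only a sketch --- so you could close the gap by citation, but as a self-contained argument yours is incomplete.) The missing mechanism is not a double-commutant argument, so your worry that $\{\Pi_c X_{\Ab}\Pi_c\}$ is not an algebra is beside the point. Because $O$ and $O^\dag$ act within $\Hc$ and $|\W\>$ is maximally entangled between $\Hc$ and $R$, first mirror $O$ onto the reference: there is an operator $O_R$ (the transpose of $O$ in the $|\a\>$ basis) with $O_R|\W\>=O|\W\>$ and $O_R^\dag|\W\>=O^\dag|\W\>$. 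Setting $\r_{\Ab R}\eq\Tr_A|\W\>\<\W|$, one then has, for arbitrary $X_{\Ab}$ and $Z_R$ (identity factors suppressed),
\be
\Tr\left(\r_{\Ab R}\left[O_R\otimes I_{\Ab},\,X_{\Ab}\otimes Z_R\right]\right)
=\<\W|\left[O,\,X_{\Ab}\otimes Z_R\right]|\W\>
=\<\W|\left([O,X_{\Ab}]\otimes Z_R\right)|\W\>=0\,,
\ee
where the last equality holds because only matrix elements of $\Pi_c[O,X_{\Ab}]\Pi_c$ appear. Since products $X_{\Ab}\otimes Z_R$ span all operators on $\HAb\otimes\mathcal{H}_R$, this gives $[O_R\otimes I_{\Ab},\,\r_{\Ab R}]=0$: the commutation you need is with the single positive operator $\r_{\Ab R}$, not with a set of projected operators. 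This implies $O_R\otimes I_{\Ab}$ preserves every eigenspace of $\r_{\Ab R}$. Preservation of the support is exactly your reachability condition, so $O_A$ exists; preservation of the individual eigenspaces (matrix elements of $O_R\otimes I_{\Ab}$ connect only Schmidt vectors with equal coefficients) is what further guarantees that this \emph{same} $O_A$ satisfies $O_A^\dag|\p\>=O^\dag|\p\>$. Note that your plan of running ``the $O^\dag$ version of the argument'' separately does not accomplish this: it produces some $Y_A$ with $Y_A|\p\>=O^\dag|\p\>$, but $Y_A=O_A^\dag$ is not automatic and in fact requires the eigenspace statement above.
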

\begin{proof}
First we note that the two statements are guaranteed to be equivalent by the theorem proved in Appendix B of \cite{Almheiri:2014lwa}; for the convenience of the reader we sketch the logic of that proof in an appendix below.  Therefore we will only need to prove the first statement.  Furthermore it is sufficient to prove the theorem when $O$ is a Hermitian operator which we assume now.\footnote{To see this, we recall that any operator is a (complex) linear combination of two Hermitian operators.}

Consider any state $|\p\> \in \Hc$, and let $\l$ be an arbitrary real number.  For the two states $|\p\> $ and
\be\la{psidef}
|\y\> \eq e^{i\l O} |\p\>
\ee
which are both in $\Hc$, the assumption of the theorem guarantees the existence of a tensor factorization $\Hc = \Ha \otimes \Hab$ such that $O$ acts only on $\Ha$, which implies
\be
\r_{\ab} = \s_{\ab}
\ee
because the two states only differ by the action of a unitary operator $e^{i\l O}$ on $\Ha$.  Using \er{cond} we find
\be\la{sdm}
\r_{\Ab} = \s_{\Ab} \qu\Rightarrow\qu \<\y |X_{\Ab}| \y\> = \<\p |X_{\Ab}| \p\> \,.
\ee
Using \er{psidef} we may rewrite \er{sdm} as
\be
\<\p | e^{-i\l O} X_{\Ab} e^{i\l O} | \p\> - \<\p |X_{\Ab}| \p\> =0 \,.
\ee
Expanding this equation to linear order in $\l$, we find
\be
\<\p | [O, X_{\Ab}] | \p\> =0 \,,
\ee
which proves \er{zcomm}, and hence also \er{idact}.
\end{proof}

In the above theorem we assumed that $\mH$ is finite-dimensional to avoid any subtleties with the proof of Appendix B of \cite{Almheiri:2014lwa}.  In AdS/CFT we can accomplish this by introducing a UV cutoff in the CFT. This only affects physics near the asymptotic boundary of AdS, and therefore is not essential to our discussion.  In bulk language the assumptions for the theorem follow from picking an $O$ that lies in the entanglement wedge $\EA$ for any state in $\Hc$, with the choice of factorization coming from the quantum extremal surface for the state $|\phi\ran$.  This ensures that we can apply eq. \eqref{JLMS} for the complement region $\Ab$, which then implies \eqref{cond}.

\section{Discussion}
We view the theorem of the previous section as establishing the existence of entanglement wedge reconstruction.  Several comments are in order:

Our argument ultimately rests on the validity of the quantum version \eqref{flm} of the HRT formula, derived to order $N^0$ by \cite{Faulkner:2013ana}, and conjecturally extended to higher orders in $1/N$ by \cite{Engelhardt:2014gca}.  Understanding those results better is thus clearly of interest for bulk reconstruction.  Any discussion of bulk reconstruction will ultimately be approximate, so it would be interesting to prove an approximate version of our theorem and use it to clarify the stability of our results under small perturbations.

Note that our proof is constructive.  After establishing the assumptions of the theorem of Appendix B in \cite{Almheiri:2014lwa} hold, that theorem provides an explicit formula for the reconstruction $O_A$.  Our construction is thus an improvement on the HKLL procedure even within the causal wedge, since the nonstandard Cauchy problem involved in causal wedge reconstruction has only been solved in special cases.  However, it does not give much insight into how to think about the reconstruction from a bulk point of view.  This is to be contrasted with the HKLL algorithm, which (at least when it works) proceeds by solving bulk equations of motion.  We believe that a bulk interpretation should exist, and finding it could be quite illuminating.

\section*{acknowledgments}
{\small This work was initiated at the Quantum Error Correction and Tensor Network Project Meeting of the It from Qubit Collaboration.  We would like to thank the participants for many useful discussions, the KITP for hospitality during the meeting, the Simons foundation for funding the collaboration, and the National Science Foundation for supporting the KITP under Grant No. NSF PHY11-25915.  We would also like to thank Aitor Lewkowycz for useful discussions about gravitons and the extremal surface, and William Donnelly for pointing out a defect in an earlier version of our proof of the theorem of Section 4.  XD is supported in part by the Department of Energy under grant DE-SC0009988 and by a Zurich Financial Services Membership at the Institute for Advanced Study.  DH is supported by DOE grant  DE-FG0291ER-40654 and the Harvard Center for the Fundamental Laws of Nature. AW is supported by the Martin A. \& Helen Chooljian Membership at the Institute for Advanced Study.}

\appendix\label{proof}
\section{Appendix: Theorem of Operator Algebra Quantum Error Correction}

In this appendix, we give an intuitive summary of the derivation of the theorem establishing the equivalence between \er{zcomm} and \er{idact}.  More details can be found in the full proof provided in Appendix B of \cite{Almheiri:2014lwa}.

Suppose that $\Hc$ is spanned by an orthonormal basis $|i\>_{A\Ab}$.  Consider the state
\be\la{pref}
|\P\> = \sum_i |i\>_R \otimes |i\>_{A\Ab} \,,
\ee
where $R$ is a reference system whose Hilbert space is spanned by an orthonormal basis $|i\>_R$.  Using \er{pref} we may mirror any operator $O$ acting within $\Hc$ to an operator $O_R$ on $R$ such that $O|\P\> = O_R|\P\>$, $O^\dag |\P\> = O_R^\dag |\P\>$.

The task is then to instead view $O_R$ as $O_R \otimes I_{\Ab}$ and mirror it back to an operator $O_A$ on $A$.  To do this we need the Schmidt decomposition
\be
|\P\> = \sum_\a c_\a |\a\>_A \otimes |\a\>_{R\Ab} \,,
\ee
where the states $|\a\>_{R\Ab}$ with $c_\a \neq 0$ generally span a subspace of $\mathcal{H}_R \otimes \HAb$.  We can mirror $O_R \otimes I_{\Ab}$ onto $A$ if it acts within this subspace which is guaranteed by
\be
[O_R \otimes I_{\Ab}, \r_{R\Ab}(\P)] = 0 \,.
\ee
This statement is implied by \er{zcomm}.  Therefore we obtain \er{idact}.

\bibliography{bibliography}
\end{document}